
\documentclass{article}

\usepackage{microtype}
\usepackage{graphicx}
\usepackage{booktabs} 
\usepackage{tabularx}
\usepackage{caption}
\usepackage{calc}
\usepackage{subcaption}
\usepackage{enumitem}
\usepackage{mathrsfs}
\usepackage{algorithmic}

\usepackage{hyperref}



\usepackage[accepted]{icml2024}

\usepackage{amsmath}
\usepackage{amssymb}
\usepackage{mathtools}
\usepackage{amsthm}

\usepackage[capitalize,noabbrev]{cleveref}

\theoremstyle{plain}
\newtheorem{theorem}{Theorem}[section]

\newtheorem{lemma}[theorem]{Lemma}

\theoremstyle{definition}

\theoremstyle{remark}

\usepackage[textsize=tiny]{todonotes}

\icmltitlerunning{Rethinking Specificity in SBDD}
\DeclareMathSizes{10}{10}{4}{3}
\begin{document}

\twocolumn[
\icmltitle{Rethinking Specificity in SBDD: \\Leveraging Delta Score and Energy-Guided Diffusion}



\icmlsetsymbol{equal}{*}

\begin{icmlauthorlist}
\icmlauthor{Bowen Gao}{equal,aaa}
\icmlauthor{Minsi Ren}{equal,bbb}
\icmlauthor{Yuyan Ni}{ccc}
\icmlauthor{Yanwen Huang}{ddd}
\icmlauthor{Bo Qiang}{ddd}
\icmlauthor{Zhi-Ming Ma}{ccc}
\icmlauthor{Wei-Ying Ma}{aaa}
\icmlauthor{Yanyan Lan}{aaa}

\end{icmlauthorlist}

\icmlaffiliation{aaa}{Institute for AI Industry Research (AIR), Tsinghua University}
\icmlaffiliation{bbb}{Institute of Automation, Chinese Academy of Sciences}
\icmlaffiliation{ccc}{Academy of Mathematics and Systems Science, Chinese Academy of Sciences}
\icmlaffiliation{ddd}{Department of Pharmaceutical Science, Peking University}

\icmlcorrespondingauthor{Yanyan Lan}{lanyanyan@air.tsinghua.edu.cn}

\icmlkeywords{Machine Learning, ICML}

\vskip 0.3in
]



\printAffiliationsAndNotice{\icmlEqualContribution} 

\begin{abstract}
In the field of Structure-based Drug Design (SBDD), deep learning-based generative models have achieved outstanding performance in terms of docking score. However, further study shows that the existing molecular generative methods and docking scores both have lacked consideration in terms of specificity, which means that generated molecules bind to almost every protein pocket with high affinity. To address this, we introduce the \emph{Delta Score}, a new metric for evaluating the specificity of molecular binding. To further incorporate this insight for generation, we develop an innovative energy-guided approach using contrastive learning, with active compounds as decoys, to direct generative models toward creating molecules with high specificity. Our empirical results show that this method not only enhances the delta score but also maintains or improves traditional docking scores, successfully bridging the gap between SBDD and real-world needs.
\end{abstract}

\section{Introduction}

Structure-based drug design (SBDD) has become a pivotal strategy in creating novel therapeutic agents. This approach leverages the three-dimensional structural information of target receptors to generate drug-like small molecules that can bind to these receptors effectively. Recent advancements, particularly those in deep learning-based generative models, have shown promising results. Studies such as \citep{NEURIPS2021_31445061, long2022DESERT, guan3d, guan2023decompdiff, peng2022pocket2mol} demonstrate the capability of these models to achieve docking scores that surpass those of reference ligands. 

However, a deeper analysis suggests that these accomplishments may not fully reflect practical efficacy. Further investigation of generated molecules with high docking scores reveals a critical issue: not only do they have a high docking score with the assigned pocket, but they also achieve high docking scores with other unintended protein pockets.
From a biological perspective, this observation indicates that current deep learning generative models generate molecules of low specificity.


Specificity is a crucial aspect of drug design. The efficacy of a drug is not solely dependent on its ability to bind to the intended disease-related target but also on its specificity for not binding to essential proteins. Promiscuous drugs, interacting with multiple biological targets, can potentially cause serious adverse effects\citep{harrison2016phase,lin2019off,wong2019estimation}. This lack of specificity contributes to the development of pan-assay interference compounds (PAINS), characterized by their undesirable broad biological activity and potential toxicity \citep{schneider2016novo}.


Unfortunately, specificity has been a largely overlooked factor in benchmarking SBDD previously. Currently, the most important metric for SBDD is based on docking scores. Traditional scoring functions merely evaluate the docking pose's spatial and electrostatic fit, failing to differentiate between highly specific molecules and promiscuous ones that bind to multiple targets \citep{zheng2022improving}. This limitation means modifications can artificially boost docking scores, misleadingly suggesting a compound's specificity and effectiveness. Thus, there's a pressing need for new metrics that accurately assess the specificity of molecule-pocket interactions, moving beyond the generalized interaction probability that current docking scores offer.

Besides benchmarking, they also fail to explicitly model the specific binding behavior for the generation. Generally, during training, these methods treat the pocket as context and attempt to reconstruct the ligand. 
Nevertheless, this method focuses on maximizing the joint probability of pocket-molecule pairs rather than the conditional probability of a molecule given a pocket, which more precisely reflects a drug's specificity and efficacy. We denote the conditional probability of a molecule given a pocket as the conditional binding probability.

To address these shortcomings, we have adopted a probabilistic framework and defined the conditional binding probability. From this perspective, we introduce a better evaluation metric named \emph{Delta Score} to measure the specific binding ability of current deep-learning-based SBDD models. The Delta Score has been designed and theoretically substantiated to reflect a molecule's selective binding behavior more accurately. It effectively bridges the theoretical predictions with the practical necessities of drug design.

Our analysis using the Delta Score has led to an essential finding: despite their prowess in achieving high docking scores, deep learning models underperform compared to reference ligands when evaluated using the Delta Score. This discrepancy highlights the need for improved deep-learning-based drug design methods.

In response to these challenges, we have developed an energy-guided approach\citep{dhariwal2021diffusion,zhao2022egsde,bao2022equivariant}. This approach is anchored in the use of contrastive learning and the strategic use of active compounds as decoys during training, inspired by \citep{Radford2021LearningTV,gao2023drugclip}. The training objective of our energy function aligns with the conditional binding probability formula, meaning that minimizing the training loss is equivalent to maximizing the conditional binding probability. This function guides the diffusion process in our generative model, directing it toward the conditional generation of molecular structures that exhibit high specific binding behavior.

Our experimental results are encouraging, demonstrating that our method not only improves the Delta Score but also maintains or enhances conventional docking scores. By implementing this novel approach, we aim to refine the process of AI-assisted drug design. Our goal is to ensure that the resulting molecules are not only effective in binding to their targets but also exhibit the selectivity required for safe and efficacious therapeutic use. 

Our contributions are summarized as follows:
\begin{itemize}
    \item The proposal of the Delta Score, an evaluation metric, measures the specific binding ability of generated molecules, demonstrating that advancements in previous methods predominantly enhance unconditional aspects rather than conditional binding.
    \item The introduction of a specific binding energy guidance approach, which explicitly models specific binding, directs the diffusion process toward producing molecules with a higher conditional binding probability to their designated targets.
    \item The achievement of promising empirical outcomes, which not only improve the Delta Score but also enhance the traditional docking score.
\end{itemize}

\section{Related Works}
With the emergence of geometric deep learning models \citep{satorras2021n,hoogeboom2022equivariant,geiger2022e3nn}, the field of Structure-Based Drug Design (SBDD) has shifted towards 3D neural networks for encoding protein structures and decoding 3D molecule conformations, representing real-world 3D interactions. Various methods have been proposed, including voxel-based methods \citep{masuda2020generating}, auto-regressive models \citep{NEURIPS2021_31445061,long2022DESERT}, and diffusion models \citep{guan3d,guan2023decompdiff}.

As for the evaluation metrics, most of the previous work used Vina score \citep{trott2010autodock} for binding affinity evaluation, which is a typical docking software to predict the interactions between a small molecule and a protein target. However, several studies have shown that other commercial docking software like Glide \citep{friesner2004glide} and Gold \citep{verdonk2003improved}, achieves a more accurate docking pose predicton and virtual screening ability compared to Vina. Recently, several studies have pointed out the issues of current benchmarking baselines, like PoseCheck\citep{harris2023benchmarking} and PoseBusters \citep{buttenschoen2024posebusters}. They mainly focus on the reliability of the conformations directly generated by the generative models. However, there is limited discussion on the reliability and value of the widely used docking score metric in the assessment of SBDD methods.

\begin{figure*}[ht]    
    \centering
    \begin{minipage}{0.4\textwidth}
        \centering
        \includegraphics[width=\linewidth]{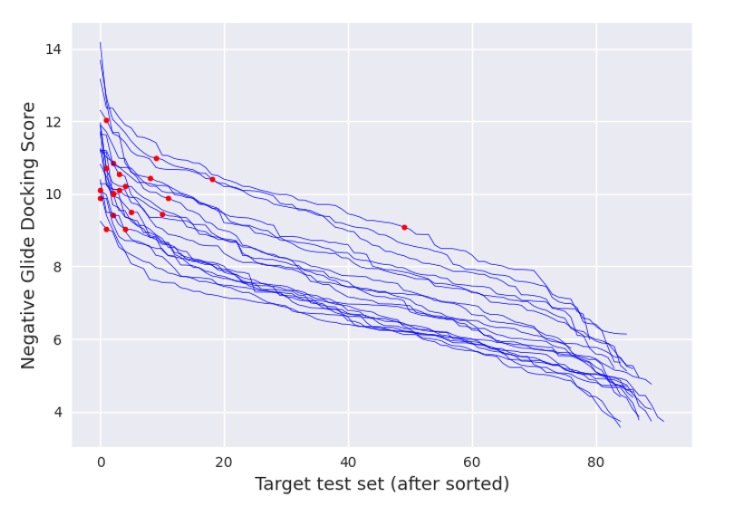}
        \captionsetup{singlelinecheck=false,justification=centering} 
        \subcaption{Docking Scores against targets in the test set after sorted (multiplied by -1)}    
        \label{fig:left}    
    \end{minipage}\hspace{0.01cm}    
    \begin{minipage}{0.4\textwidth}    
        \centering
        \includegraphics[width=\linewidth]{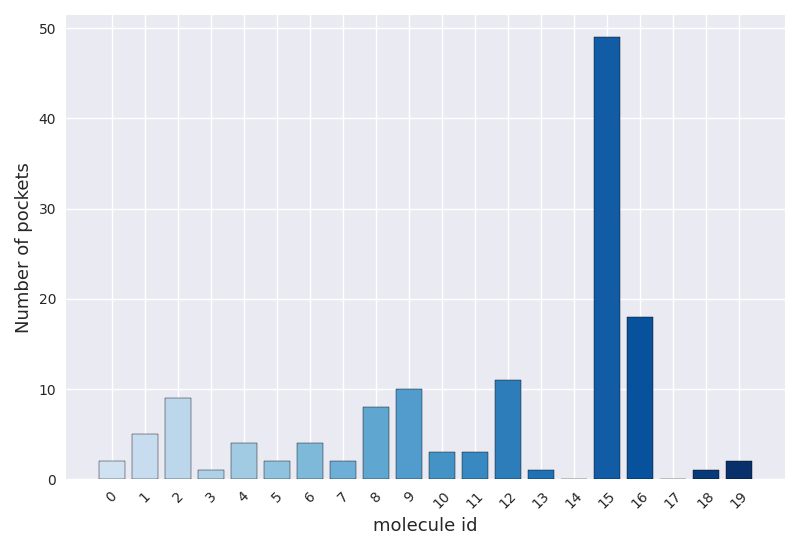}
        \captionsetup{justification=centering} 
        \subcaption{The number of pockets which surpass true target on docking score}    
        \label{fig:right}    
    \end{minipage}    
    \caption{In the left image, each line represents the sorted docking score of a small molecule against all pockets in the testset. The highlighted red represents its true target. The right image displays the number of test pockets for each small molecule in which the docking score is higher than their respective true target.}    
    \label{figure:superior scores}    
    \vspace{-0.5cm}
\end{figure*}
\vspace{0cm}
\section{Reassessing SBDD: Addressing Molecular Specificity}
We first conduct experiments on CrossDocked2020 dataset \citep{francoeur2020three} to compare different deep generative models, including auto-regressive model (denoted as AR) \citep{NEURIPS2021_31445061},  Pocket2mol \citep{peng2022pocket2mol}, DrugGPS \citep{zhang2023learning}, Targetdiff \citep{guan3d} and Decompdiff \citep{guan2023decompdiff}. The data preprocessing and splitting are all following \citet{guan3d} to ensure a fair comparison. For each method, we generate 20 molecules for each target pocket in the testset. Besides, we randomly sample 100 molecules for each target from the trainset, to form a baseline named `Random'. 
We adopt Glide which has demonstrated superior performance in predicting both accurate conformation and binding affinity to make our results more convincing \citep{zhang2023planet}.

\subsection{Notable Findings in Docking Scores}
The analysis of Table 1 reveals several key observations: \\
1. All methods can generate molecules with docking scores surpassing those of reference ligands across different targets. 
2. Diffusion based methods consistently exceed the reference ligands in mean docking scores. 
3. Surprisingly, \textbf{random sampling} from the training set can yield a mean Vina score close to that of reference ligands, with 30.83\% of molecules performing better.

Table \ref{table:docking score} shows that modern deep generative models are highly effective in producing molecules that outperform reference ligands, often with a success rate of 50\% or more. This suggests significant progress in Structure-Based Drug Design (SBDD). However, the reliability of docking scores as a sole quality measure is questionable. Research points to the susceptibility of docking software to manipulation, and our findings support this — about 30\% of randomly selected molecules from a molecular library can outperform reference ligands in docking scores, raising concerns about the current docking score metrics' validity.

\begin{table}[ht]
\vspace{-0.3cm}
\begin{center}
\caption{Docking Score Experimental Results. For a pocket with multiple generated molecules, we show results with both mean and min aggregation values.}
\label{table:docking score}


\resizebox{\linewidth}{!}{
\begin{tabular}{l|ccc}
\hline
& \multicolumn{2}{c}{Docking Scores($\downarrow$)} &  better than Ref ($\uparrow$) \\ \hline
Methods           & mean  & min & \\ \hline
Random          & -5.727   &-8.886    &30.83\%    \\
AR                &-5.738 & -7.529   &30.55\% \\
Pocket2Mol        & -5.945    &-7.734    &32.48\%  \\
DrugGPS    &-6.554 & -8.732  & 48.13\% \\
TargetDiff        & -6.664    &-8.501   &50.06\%  \\ 
DecompDiff & -7.102   &-8.970 & 56.04\% \\ \hline
Reference ligand  & -6.632     &-6.632   &-\\ \hline
\end{tabular}
}
\end{center}
\vspace{-0.5cm}
\end{table}

\subsection{Specificity Concerns in SBDD}

Our examination of molecules \textbf{with superior scores} from current models reveals a significant issue with specificity. We generated 100 molecules for each of 20 randomly selected pockets in the test set and chose the one with the best docking score for each pocket, resulting in 20 molecules. Despite their high glide docking scores (over -9) on their intended targets, cross-docking these molecules with all test set pockets (figure \ref{figure:superior scores}) showed that only 2 out of 20 had the highest docking score for their true targets. This indicates that these molecules generally bind well to multiple pockets, not just their specific target, highlighting a lack of specificity.

Specificity is crucial in drug design; non-specific molecules can bind to unintended targets, potentially causing adverse effects and diminishing efficacy. This issue is compounded by pseudo-active sites (PAINS), which can falsely enhance docking scores and mislead predictions of a molecule's true target. The prevalence of non-specific binding not only compromises drug effectiveness but also raises safety concerns due to potential side effects. High specificity in drug development is essential to ensure selective binding, maximizing therapeutic benefit and minimizing risks.

Both previous models and evaluation metrics have overlooked this critical aspect of specificity.

\subsubsection{Inadequacy of Docking Scores to Reflect Specificity}
Docking software, which is widely used in computational drug design, typically relies on force field models to calculate the interaction energy between molecules and receptors\citep{ferreira2015molecular,lyu2023modeling,mysinger2010rapid}. These models focus on the joint binding energy of pocket-ligand pairs, denoted as \( E(x, y) \). However, this approach does not fully address the specificity of binding interactions. 

A significant drawback of force field models lies in their empirical design, which relies on training with existing molecular structures and properties \citep{zheng2022improving, kitchen2004docking, quiroga2016vinardo}. This approach introduces biases, limiting the models' ability to accurately capture the full spectrum of molecular features and interactions. Consequently, when generative models produce molecules with structures that docking software erroneously favors, it leads to artificially inflated docking scores. This reveals a significant limitation of docking scores: they do not always accurately reflect the specific binding behavior of protein-molecule pairs. Instead, these scores can often reflect a bias towards certain molecular structures, misleadingly suggesting an unconditional binding ability \citep{eldridge1997empirical, wang2002further}.

\subsubsection{Limitations in Modeling Specificity in Previous SBDD Methods}

In the context of structure-based drug design (SBDD), the problem can be conceptualized as optimizing the generative model, denoted as \( \theta \), to maximize \( p_\theta(x|y) \). Here, \( x \) represents the molecule, while \( y \) denotes the protein pocket. Advanced generative models are adept at producing molecules with high docking scores. However, these molecules often demonstrate a lack of selectivity, exhibiting high docking scores across a wide range of pockets within the test set.

This phenomenon can be further elucidated through the Bayesian formula:
\begin{equation}
\label{bayesian}
    p_\theta(x,y) = p_\theta(y|x)p_\theta(x) \text{.}
\vspace{-0.2cm}
\end{equation}

We contend that the perceived progress made by contemporary advanced generative methods, evident in the enhanced \( p_\theta(x,y) \), is primarily attributable to improvements in the unconditional components, \( p_\theta(x) \). This development stands in stark contrast to the conditional component \( p_\theta(y|x) \), which is crucial for achieving specificity in molecular binding. A significant repercussion of this disparity is the tendency of generated molecules to attain uniformly high docking scores throughout the test set, irrespective of the distinctiveness of the pocket binding sites. The ensuing experimental results, to be detailed in the section \ref{section:result}, reinforce our hypothesis. 

\section{Method}

\subsection{From Unconditional to Conditional}

Our previous analysis underscores the need for a heightened focus on specific binding behaviors in SBDD to align with real-world requirements. In this section, we adopt a probabilistic framework to bridge the gap between the outputs of generative models and SBDD evaluation criteria. We use $x$ to denote a molecule and $y$ to denote a protein pocket from pre-defined finite pocket set $\mathbb{Y}$ and ligand set $\mathbb{X}$. Utilizing the Boltzmann distribution, the binding energy induces a joint probability describing the probability of complex formation:
\begin{equation}
\label{boltzman}
    p(x,y) = \frac{1}{Z}e^{-E(x,y)},
\end{equation}
where $Z$ is the normalization coefficient.

%

Crucially, we focus on the conditional probability \( p(y|x) \), the likelihood that molecule \( x \) binds specifically to pocket \( y \) among all potential targets. We propose to formulate a novel specificity metric by using the logarithm of the conditional probability, the larger the better:
\begin{align}\label{eq:conditional measure main}
    &\log p(y|x) = \log\frac{p(x,y)}{p(x)} = \log\frac{\frac{1}{Z}e^{-E(x,y)}}{ \sum_{y\in \mathbb{Y}}\frac{1}{Z}e^{-E(x,y)}} \notag\\
    &=  -E(x,y)-\log\sum_{y\in \mathbb{Y}} e^{-E(x,y)}.     
\end{align}

Equation \eqref{eq:conditional measure main} indicates that to assess the specific binding ability, it is necessary to consider the docking scores of both positive pairs and negative pairs rather than solely focusing on the score of one pair of molecule and receptor. 
This approach allows us to quantitatively assess the specificity of molecule-pocket interactions from the pairwise interaction energy, highlighting a crucial factor for the efficacy of SBDD strategies.

\subsection{Measure the specific binding}
Inspired by the general specificity metric defined in the previous section, we propose the \textbf{Delta Score} to measure the generated model and reflect the specificity and efficacy of the generated molecules. 

For a dataset with \( n \) pockets \( y_1, y_2, \cdots, y_n \),
the model to be evaluated generates \( m_i \) molecules \( x_{i1}, x_{i2}, \cdots, x_{im_i} \) for each pocket \( y_i \). Then for any $i\in\{1,\cdots, n\}$, \(j\in\{1,\cdots, m_i\}\), the specificity metric can be calculated by
\vspace{-0.15cm}
\begin{align}
&S(x_{ij},y_i)+\log\sum_{k=1}^n e^{-S(x_{ij},y_k)} \label{eq:ds11}\\
  & \approx S(x_{ij},y_i)-\min_{y_k} S(x_{ij},y_k) ,\label{eq:ds22}
\end{align}
where we use a docking score, denoted as $S(x,y)$, to estimate the interaction energy $E(x,y)$. \eqref{eq:ds11} is the negative of the metric defined in \eqref{eq:conditional measure main}. \eqref{eq:ds22} is a tight approximation of \eqref{eq:ds11}, with the log-sum-exp approximated by the max function. As a result, \eqref{eq:ds22} has the same unit as the utilized docking score $S(x, y)$ and is also the smaller the better, analogous to existing docking scores.

 
 Notice that in order to calculate the minimization term in \eqref{eq:ds22} for all $i=1\cdots n$, $j=1,\cdots m_i$, we need to perform docking for molecules with all possible pockets which is of quadratic complexity $O(n^2)$. To address this challenge, we provide a computational efficiency estimation of \eqref{eq:ds22} by random sampling. Finally, we derive the Delta Score as the specificity metric of the generative model for a given pocket $y_i$:
\begin{equation}
\label{formula:delta score main}
\begin{split}
\textbf{Delta Score}(y_i) = \frac{1}{m_i} \sum_{j=1}^{m_i} (-S(x_{ij},y_i) + S(x_{ij},y_k)), \\
 \text{for each } i, \text{we sample } k \in \{1, 2, \ldots, n\} \text{ with } k \neq i.
\end{split}
\end{equation}
Intuitively, a smaller value of Delta Score indicates the molecules generated by the model specifically for target $y_i$, have a high affinity towards target $y_i$ itself relative to the average affinity across all the other targets $y_k$.
We provide more discussion of the derivation of Delta Score in Appendix \ref{sec:app DS}.

\subsection{Generation with Specific binding}
\subsubsection{Overview of our method}

\begin{figure*}[ht]
\begin{center}
\centerline{\includegraphics[width=1\linewidth]{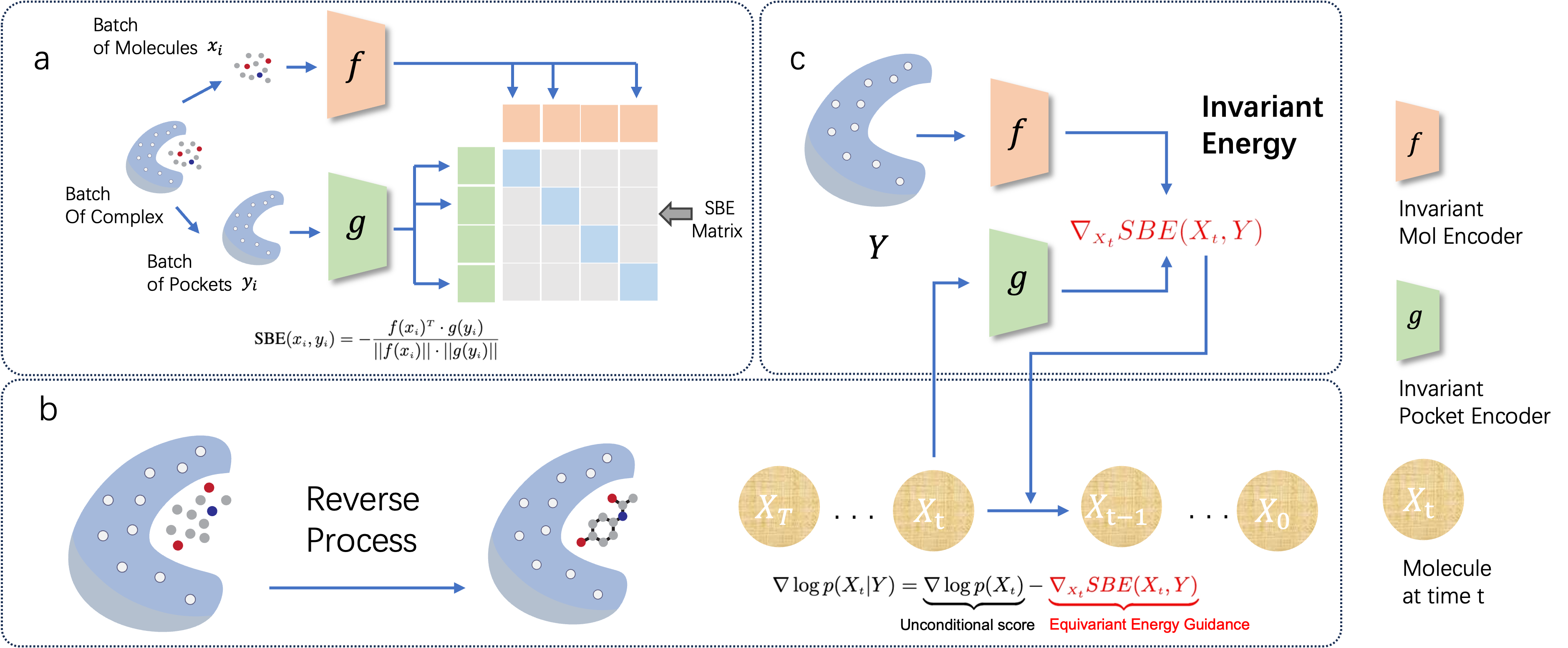}}
\caption{The main framework of our SBE-Diff model. Part a) depicts the training phase of the Specific Binding Energy (SBE) model, where in-batch softmax loss is employed. We use ligands designated for different pockets as negative instances for the current pocket, aiming to address the specific binding. Part b) describes the model's reverse diffusion process, wherein the transition from \(X_{t+1}\) to \(X_t\) is guided by minimizing the specific binding energy. Part c) demonstrates the calculation of the SBE. This is achieved by utilizing pocket and molecule embeddings, produced by their respective specially trained encoders.}
\label{figure: main framework}
\end{center}
\vspace{-0.7cm}
\end{figure*}
We denote the molecule as $X = [c^X, v^X]$, where [·, ·] is the concatenation operator and $c \in R^{N_X\times3}$ and $ v \in R^{N_X\times K}$ denote atom Cartesian coordinates and one-hot atom types respectively, $N_X$ is the number of atoms, $K$ is the number of type of an atom respectively. Pockets are denoted as $Y = [c^Y, v^Y]$  similarly. 


The overview of our method is shown in figure \ref{figure: main framework}. Our primary objective is to enhance the specificity of the interaction between synthesized small molecules and their intended targets. At the core of our methodology, we develop a robust model that is adept at assessing this particular attribute. The trained model itself functions as an energy function, which plays a crucial role in guiding the diffusion-based synthesis of these molecules. By effectively utilizing this function, we can direct the generation of molecules towards those that demonstrate a heightened ability to bind specifically to their targets. A key requirement for the molecular generative process is that the probability $p_\theta(x_0 | y)$ should be invariant to translation and rotation of the protein-ligand complex, which will be satisfied if the following properties hold \citep{guan3d,bao2022equivariant}:
\begin{enumerate}
  \item The initial distribution of diffusion process $p_\theta(X_T|Y) $ is an SE(3)-invariant distribution.
  \item The transformation at each time step of the diffusion process is SE(3)-equivariant.
  \item The energy function used to guide the diffusion process is invariant under SE(3) transformations.
\end{enumerate}
To meet the above requirements, we move the center of mass (CoM) of the protein atoms to zero during initialization, and apply isotropic Gaussian noise as the coordinates of atoms at time step T, ensuring that the initial distribution is SE(3)-invariant. Furthermore, we adopt 3D-Equivariant graph neural network to parameterize the denoiser for molecular diffusion process as well as the encoder for the energy function. We provide detailed information on the feature updates of the SE(3)-Equivariant GNN network in Appendix B.
\subsubsection{Molecular Diffusion Model}
Following \citep{guan3d}, in the forward diffusion process, we define the transition kernels of the diffusion model for atomic coordinates and types using Gaussian noise and categorical noise, respectively. $\beta_t$ is defined by fixed variance schedules, we denote $\alpha_t = 1 - \beta_t$, $\overline{\alpha}_t=\prod_{s=1}^t \alpha_s$ and $\Tilde{\beta_t}=\frac{1-\overline{\alpha}_{t-1}}{1-\overline{\alpha}_t}\beta_t$, we have:

\vspace{-0.0cm}
\begin{align}
q(X_t|X_{t-1}) &= \prod_{i=1}^{N_X}\mathcal{N}(c_{t,i}^X;\sqrt{\alpha_t}c_{t-1,i}^X,\beta_t\mathcal{I}) \notag \\
&\qquad \cdot \mathcal{C}(v_{t,i}^{X}|\alpha_t v_{t-1,i}^X + \beta_t/K)
\\
q(X_t|X_{0}) &= \prod_{i=1}^{N_X}\mathcal{N}(c_{t,i}^X;\sqrt{\overline{\alpha}_t}c_{0,i}^X,(1-\overline{\alpha}_t)\mathcal{I}) \notag \\
&\qquad \cdot
\mathcal{C}(v_{t,i}^{X}|\overline{\alpha}_t v_{0,i}^X + (1-\overline{\alpha}_t)/K)
\label{formula:forwar}
\end{align}


The corresponding normal posterior if atom coordinates and categorical posterior of atom types can be computed as:
\begin{align}
q(X_{t-1}|X_t, X_0) &= \prod_{i=1}^{N_X}\mathcal{N}(c_{t-1,i}^X;\Tilde{\mu}(c_{t,i}^X, c_{0,i}^X),\Tilde{\beta}_t\mathcal{I}) \notag \\
&\qquad \cdot \mathcal{C}(v_{t-1,i}^{X}|\Tilde{c}(v_{t,i}^X, v_{0,i}^X))\text{.}
\end{align}
\begin{align}
\Tilde{\mu}(c_{t,i}^X, c_{0,i}^X)=\frac{\sqrt{\overline{\alpha}_{t-1}}\beta_t}{1-\overline{\alpha}_t}&c_{0,i}^X + \frac{\sqrt{\alpha_t}(1-\overline{\alpha}_{t-1})}{1-\overline{\alpha}_t} c_{t,i}^X, \notag \\
c^*(v_{t,i}^X,v_{0,i}^X) &=  [\overline{\alpha}_{t-1}v_{0,i}^X + \frac{(1-\overline{\alpha}_{t-1})}{K}] \notag \\
&\odot[\alpha_t v_{t,i}^X + \frac{(1-\alpha_t)}{K}],\notag \\
\Tilde{c}(v_{t,i}^X,v_{0,i}^X) &= \frac{c^*}{\sum_{k=1}^Kc_k^*}\text{.}
\end{align}
During the generative process, we will recover molecule $X_0$ from the initial noise $X_T$. We predict the atom coordinates $\hat{c}_{0|t,i}^X$ and atom types $\hat{v}_{0|t,i}^X$ using denoising SE(3)-Equivariant network $\theta$ to approximate the reverse distribution at every time step $t$ as follows:
\begin{align}
p_\theta(X_{t-1}|X_t) &= \prod_{i=1}^{N_X}\mathcal{N}(c_{t-1,i}^X;\Tilde{\mu}(c_{t,i}^X, \hat{c}_{0|t,i}^X),\Tilde{\beta}_t\mathcal{I}) \notag \\
&\qquad \cdot \mathcal{C}(v_{t-1,i}^{X}|\Tilde{c}(v_{t,i}^X,\hat{v}_{0|t,i}^X))\text{.}
\end{align}
We also provide detailed information on the parameterization of the diffusion process in Appendix B.

\subsubsection{Specific Binding Energy Function}
Then we introduce the SBE-model we use to serve as the guided energy function. The SBE-model consists of a molecule encoder $\text{SBE-Enc}_X$ as well as a pocket encoder $\text{SBE-Enc}_Y$. For a protein pocket $y_i$ and a molecule $x_i$, we first define the specific binding energy (SBE) as:
\begin{equation}
\begin{aligned}
 &\text{SBE}(x_{i}, y_i) = -\frac{f(x_{i})^{T} \cdot g(y_i)}{||f(x_{i})|| \cdot ||g(y_i)||}\text{ , where}\\
 &f(x_{i}) = \text{SBE-Enc}_X(X_i)\text{,  }g(y_i) =  \text{SBE-Enc}_Y(Y_i)\text{.}
\end{aligned}
\end{equation}

The smaller the SBE is, the stronger the binding specificity. 
In order to provide guidance for the generative process throughout all diffusion steps, we train this energy function to be time-dependent, which receives the noisy molecule data $x_{t,i}$ as inputs. During the training phase, we randomly sample time step $t$ and apply formula \ref{formula:forwar} to sample the noisy molecule data $x_{t,i}$ from ground truth $x_i$.
For a batch of size $N$, we have pairwise data $\{(x_i,y_i)\}^{N}_{i=1}$. we extract a list of protein pockets $\{y_i\}^{N}_{i=1}$ and a list of corresponding molecules  $\{x_i\}^{N}_{i=1}$. Combining them together results in $N^2$ pairs $(x_{i},y_j)$ where $i,j \in[1,N]$. When $i = j$ it is a positive pair, and when $i \neq j$ it is a negative pair.

Our first training goal is to identify the true binding molecules before other molecules for a given protein pocket:
\begin{equation}
   \mathcal{L}_{p}(y_i,\{x_{t,j}\}^{N}_{j=1}) = -\frac{1}{N}\log \frac{\exp(-\text{SBE}(x_{t,i},y_i)/\tau)}{\sum_{j} \exp(-\text{SBE}(x_{t,j},y_i)/\tau)}\text{,}
\end{equation}

Symmetrically, our second training goal is to identify the true binders from a batch of protein pockets.
\vspace{-0.2cm}
\begin{equation}
   \mathcal{L}_{m}(x_{i,t},\{y_j\}^{N}_{j=1}) = -\frac{1}{N}\log \frac{\exp(-\text{SBE}(x_{t,i},y_i)/\tau)}{\sum_{j} \exp(-\text{SBE}(x_{t,i},y_j)/\tau)}\text{,}
\end{equation}
From the theoretical side, our training loss is exactly the same format as the log conditional probability in Section 4.1. Minimizing the Pocket to Mol loss $\mathcal{L}_{p}$ is equivalent to maximizing $p(y|x)$, the probability that a molecule binds to its corresponding protein pocket specifically. Similarly, minimizing the Mol to Pocket loss $\mathcal{L}_{m}$ is equivalent to maximizing $p(x|y)$, the probability that a pocket binds to its corresponding ligand specifically.

\subsubsection{SB-Energy Minimizer Diffusion Model}
We propose an energy-guided diffusion model that dynamically adjusts the types and positions of atoms in the generative process, leveraging the SBE-model. We minimize this SB-energy using the techniques of the Energy-Guided Diffusion Model \citep{bao2022equivariant} to facilitate the ability of the generated molecule to specifically bind to its target:
\vspace{-0.0cm}
\begin{equation}
\begin{aligned}
 p(X|Y) \propto p(X)\cdot\exp{(-\mathcal{E}(X,Y))} \text{, }
\end{aligned}
\end{equation}
where we adopt SBE as the guided energy function $\mathcal{E}(X,Y)$. More concretely, at each time step of the reverse diffusion process, we feed the noisy molecule $X_t$ and its corresponding pocket $Y$ into the SBE-model to calculate the SB-Energy and update the denoising process using its gradient with respect to the molecule $\nabla_{X_t}\text{SBE}(X_t,Y)$:
\begin{align}
\hat{p}_\theta(X_{t-1}|X_t, Y) &= \notag \\
\prod_{i=1}^{N_X}\mathcal{N}(c_{t-1,i}^X;\Tilde{\mu}(c_{t,i}^X , \hat{c}_{0|t,i}^X)&\textcolor{blue}{-w_1\nabla_{c^X_t}\text{SBE}(X_t,Y)},\Tilde{\beta}_t\mathcal{I}) \notag \\
\qquad \cdot \mathcal{C}(v_{t-1,i}^{X}|\Tilde{c}(v_{t,i}^X,\hat{v}_{0|t,i}^X)&\textcolor{blue}{-w_2\nabla_{v^X_t}\text{SBE}(X_t,Y)}) \text{.}
\label{formula:sbe-guided}
\end{align}
where $w_1$ and $w_2$ are empirical coefficients used to adjust the intensity of the energy condition, we set $w_1$ to 0.1 and $w_2$ to 1e-4 in practical. 
The complete sampling procedure is shown in Algorithm 1.
\begin{algorithm}
  \caption{Sampling Procedure of SBE-Diff}
  \textbf{Input:} The pocket Y, the model $\phi_{\theta}$, coefficients $w_1$ $w_2$. \\
  \textbf{Output:} Generated ligand molecule X that binds to the protein pocket specifically.
\begin{algorithmic}[1]
    \STATE Sample the number of atoms $N_X$ based on a prior distribution conditioned on the pocket size.
    \STATE Model the whole complex as a graph $[X, Y]$.
    \STATE Move center of mass of protein atoms to zero.
    \STATE Sample initial molecular atom coordinates $c_T$ and atom types $v_T$: $c_T \in \mathcal{N}(0, I)$, $v_T = \text{one hot}(\arg \max_i g_i)$, where $g \sim \text{Gumbel}(0, 1)$.
    \FOR{$t$ in $T, T-1, \ldots, 1$}
        \STATE Predict $\hat{X}_0 = [\hat{c}_0^X, \hat{v}_0^X]$ from $c_t^X, v_t^X$ with $\phi_{\theta}$: $\hat{c}_0^X, \hat{v}_0^X = \phi_{\theta}(c_t^X, v_t^X, t, Y)$.
        \STATE Calculate the posterior $\hat{p}_{\theta}(X_{t-1}|X_t, Y)$ according to equation \ref{formula:sbe-guided}.
        \STATE Sample $c_{t-1}^X$ and $v_{t-1}^X$ from the posterior $\hat{p}_{\theta}$.
    \ENDFOR
\end{algorithmic}
\end{algorithm}
\vspace{-0.5cm}
\section{ Experiments}
\subsection{Main Results}
\label{section:result}

In our study, we conducted experiments on the CrossDocked2020 dataset, following the data split outlined by \citet{guan3d}. We use the CrossDocked2020 dataset to train both the generative model and the SBE model.

Given the necessity of performing docking after shuffling binding sites, we faced a unique challenge: using the original generated 3D conformations of molecules would be unfair and unreasonable, as a molecule created for one pocket should be docked to a different pocket for this experiment. To address this, we converted all generated molecules into 1d/2d representations(Rdkit Mols) and let Glide to generate initial conformations. These conformations were then docked, and the one with the highest docking score was selected.  

Beyond the conventional docking score, we introduce the delta score, as detailed in formula \ref{formula:delta score main}. Additionally, we present the ratio indicating instances where the generated molecule outperforms the reference ligand for a specific pocket. In this context, `better performance' is defined as the scenario where the generated molecule achieves both a higher absolute docking score and a better delta score compared to the corresponding reference ligand. For the aggregation of different binding sites, we show the results in mean and median values.


\begin{table}[ht]
\vspace{-0.3cm}
\begin{center}
\caption{Experimental Results for different methods. We show the absolute docking score, delta score, and ratio of the generated molecule better than the reference ligand. For each metric, the best result is \textbf{bold} and the second best result is \underline{underlined}.}
\label{table:main_results}
\resizebox{\linewidth}{!}{
\begin{tabular}{l|ccccc}
\hline
& \multicolumn{1}{c}{Absolute $\uparrow$} & \multicolumn{2}{c}{Delta $\uparrow$} & \multicolumn{2}{c}{Ratio $\uparrow$}\\ \hline
Methods           & mean   & mean   & median   & mean  &median   \\ \hline
trainset          &5.727   & 0.044   & -0.073     &0.159 &0.053          \\
AR                &  5.737  &0.393   &0.200 &0.150 &0.039    \\
Pocket2Mol        & 5.946    &0.437    &0.106 &0.170  &0.050         \\ 
DrugGPS           &6.554        &\underline{0.490}    &\textbf{0.387} &0.235  & 0.134  \\ \hline
TargetDiff        & 6.665    &0.335    &0.102 & 0.259  &0.129  \\ 
DecompDiff  & \textbf{7.102}           &0.354    &0.220 & \underline{0.274} & \underline{0.133}  \\ 
SBE-Diff &\underline{6.815}   &\textbf{0.552}   &\underline{0.250} &\textbf{0.300}  &\textbf{0.216} \\ \hline
Reference  & 6.632          &1.158    &1.029  & - &-   \\ \hline
\end{tabular}
}
\end{center}
\end{table}

\begin{figure}[ht]
\begin{center}
\centerline{\includegraphics[width=1.0\columnwidth]{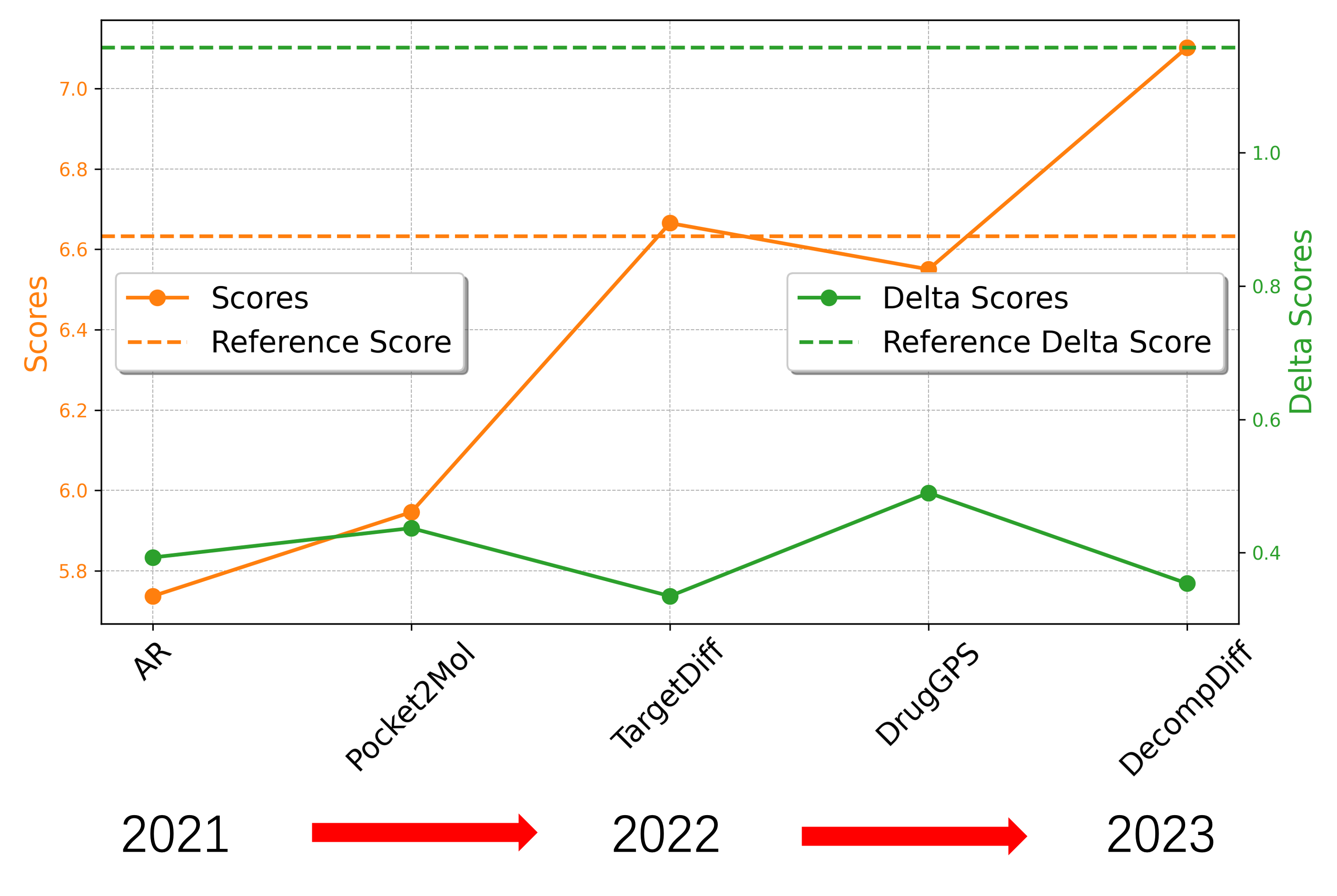}}
\caption{The evolution of absolute docking scores and delta scores obtained by various methods, organized chronologically.}
\label{figure:trend}
\end{center}
\vspace{-0.5cm}
\end{figure}

\begin{table*}[h]
\begin{center}
\caption{Comparison of SBE-Diff with TargetDiff in terms of different metrics commonly used in SBDD evaluation. Difference is shown in percentage.}
\label{table:ablation}
\resizebox{0.8\linewidth}{!}{
\begin{tabular}{l|ccccccc}
\hline
& ori score $\uparrow$ &shuffle score $\downarrow$  & delta score $\uparrow$ &Ratio $\uparrow$ & QED $\uparrow$ & SA $\uparrow$ & Diversity $\uparrow$\\ \hline
TargetDiff          &6.665   & 6.320   & 0.335   &0.259  &0.503 &0.587  &0.880        \\
SBE-Diff            &  6.815  &6.261   &0.552 &0.300 &0.497 &0.586 &0.879   \\ \hline
Difference          & 2.2\%   &0.9\% &64.8\% &15.8\% &-1.2\% &-0.0\%  & -0.0\% \\ \hline
\end{tabular}
}
\vspace{-0.5cm}
\end{center}
\end{table*}

Table \ref{table:main_results} and Figure \ref{figure:trend} collectively highlight a critical trend in structure-based drug design (SBDD). While several methods, including recent ones like Targetdiff and DecompDiff, achieve absolute docking scores surpassing those of reference ligands, they fall short in delta scores, indicating a deficiency in specific binding ability. This discrepancy is particularly pronounced when compared with older methods like AR and Pocket2Mol. The chronological plot in Figure \ref{figure:trend} further illustrates this point; despite a clear upward trajectory in docking scores over time, delta scores, represented by the green line, fluctuate without significant improvement and remain substantially lower than those of reference ligands. This observation underscores our argument that \textbf{current advancements in SBDD are primarily focused on improving unconditional part ($p_{\theta}(x)$), rather than the more crucial conditional aspects of molecular binding ($p_{\theta}(y|x)$).}

As shown in table \ref{table:main_results}, our SBE-Diff approach stands out remarkably in the comparative analysis, consistently ranking either first or second across all evaluated metrics, as detailed in our results. This performance is particularly noteworthy in terms of Delta Scores and the ratio of surpassing reference ligands, where our method demonstrates exceptional proficiency. Such results unequivocally indicate that the SBE-Diff approach excels in directing the generative process towards enhanced specificity in molecular binding. 

A notable observation from our study is the better performance of DecompDiff over TargetDiff in terms of conditional binding effectiveness. This outcome is likely attributed to DecompDiff's innovative approach of utilizing pocket information to establish priors for atom clustering, thereby explicitly and effectively modeling conditional binding. Similarly, DrugGPS demonstrates remarkable capabilities in specific binding scenarios. This is largely due to its unique strategy of constructing a subpocket prototype-molecular motif interaction graph, which directly and proficiently models the conditional binding aspect. These findings underscore the importance of tailored approaches in enhancing conditional binding effectiveness in drug design.

\subsection{Ablation and Analysis}
\begin{figure}[ht]
\begin{center}
\centerline{\includegraphics[width=1\columnwidth]{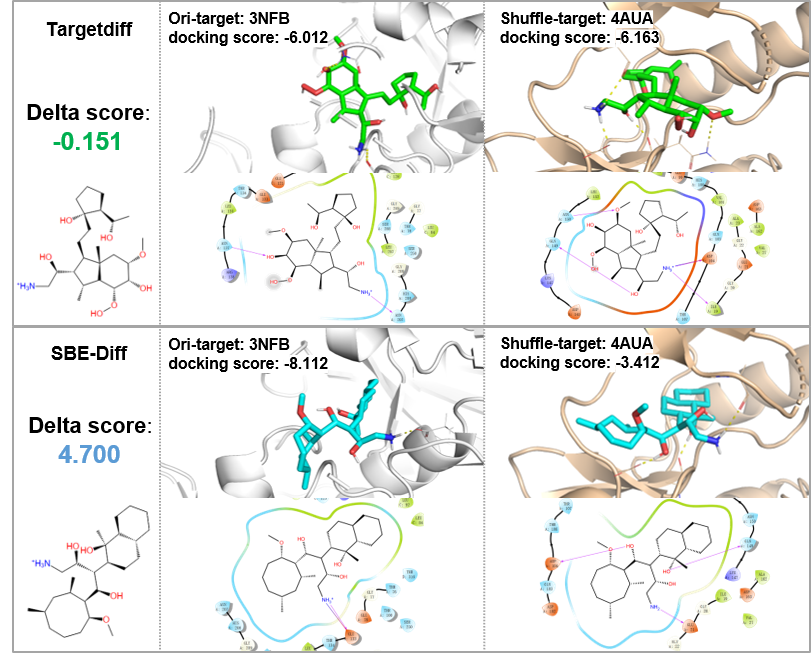}}
\caption{Case study of generated molecules docked with its original target and shuffled target.}
\label{figure:case}
\end{center}
\vspace{-0.5cm}
\end{figure}
\subsubsection{Effectiveness of the SBE guidance}

Our diffusion process and model bear similarities to Targetdiff, with a notable distinction being our Specific Binding Energy (SBE) guidance component. To demonstrate the effectiveness of our SBE guidance framework, we conducted a comparative analysis with Targetdiff. Alongside docking scores, we included other common SBDD benchmarking metrics in our evaluation.

The results, detailed in Table \ref{table:ablation}, show that our SBE-Diff approach positively impacts specificity without compromising other key metrics like QED, Synthesizability(SA), and diversity. Notably, our method enhances the absolute docking score against the specific target and reduces the score post-shuffling. This leads to a significant increase in the delta score, effectively capturing the improvement in specificity. 

\subsubsection{Visualization}

A case of the molecule generated by SBE-Diff versus the molecule generated by TargetDiff is shown in figure \ref{figure:case}. The SBE-Diff molecule achieves a significantly higher Delta score of 4.700, in contrast to the Targetdiff molecule with a modest Delta score of -0.151. The Targetdiff molecule exhibits similar docking scores on both ori-target and shuffle-target. This observation aligns with our understanding that molecules rich in hydroxyl groups, e.g. sugars and polyols, can bind to various protein pockets through numerous hydrogen bonds and other interaction mechanisms. SBE-Diff performs well, achieving a docking score of -8.112 on ori-target. SBE-Diff does this by reducing these unnecessary hydroxyl groups while adding flexible hydrophobic moiety to the small molecule, increasing the contribution of the entropy effect of the small molecule binding process. Interestingly, SBE-Diff also removes the problematic peroxide group present in the original molecule. SBE-Diff helps increase molecule specificity by reducing potential affinity for the shuffle-target, while also improving the rationality of binding pose to the intended target. 
We show more cases in appendix \ref{appendix: visualization}.

\section{Conclusion}

In our study, we analyze and rethink the current state of structure-based drug design, identifying a significant focus on generating molecules with high docking scores that primarily address unconditional aspects. We note a considerable lack of emphasis on specificity, which is crucial in effective drug design. To tackle this issue, we introduce a new evaluation metric and a guidance framework centered on conditional binding to enhance specificity. Our experimental results validate our observation about the field's inclination towards unconditional aspects and demonstrate the effectiveness of our proposed methods in improving specificity. Our work not only advocates for a paradigm shift in the approach to drug design but also provides a concrete and effective direction for achieving this goal.

\newpage

\paragraph{Impact Statement}
This paper presents work whose goal is to advance the field of Machine Learning. There are many potential societal consequences of our work, none of which we feel must be specifically highlighted here.
\newpage

\bibliography{paper}

\begin{thebibliography}{34}
\providecommand{\natexlab}[1]{#1}
\providecommand{\url}[1]{\texttt{#1}}
\expandafter\ifx\csname urlstyle\endcsname\relax
  \providecommand{\doi}[1]{doi: #1}\else
  \providecommand{\doi}{doi: \begingroup \urlstyle{rm}\Url}\fi

\bibitem[Bao et~al.(2022)Bao, Zhao, Hao, Li, Li, and Zhu]{bao2022equivariant}
Bao, F., Zhao, M., Hao, Z., Li, P., Li, C., and Zhu, J.
\newblock Equivariant energy-guided sde for inverse molecular design.
\newblock \emph{arXiv preprint arXiv:2209.15408}, 2022.

\bibitem[Buttenschoen et~al.(2024)Buttenschoen, Morris, and Deane]{buttenschoen2024posebusters}
Buttenschoen, M., Morris, G.~M., and Deane, C.~M.
\newblock Posebusters: Ai-based docking methods fail to generate physically valid poses or generalise to novel sequences.
\newblock \emph{Chemical Science}, 2024.

\bibitem[Dhariwal \& Nichol(2021)Dhariwal and Nichol]{dhariwal2021diffusion}
Dhariwal, P. and Nichol, A.
\newblock Diffusion models beat gans on image synthesis.
\newblock \emph{Advances in neural information processing systems}, 34:\penalty0 8780--8794, 2021.

\bibitem[Eldridge et~al.(1997)Eldridge, Murray, Auton, Paolini, and Mee]{eldridge1997empirical}
Eldridge, M.~D., Murray, C.~W., Auton, T.~R., Paolini, G.~V., and Mee, R.~P.
\newblock Empirical scoring functions: I. the development of a fast empirical scoring function to estimate the binding affinity of ligands in receptor complexes.
\newblock \emph{Journal of computer-aided molecular design}, 11:\penalty0 425--445, 1997.

\bibitem[Ferreira et~al.(2015)Ferreira, Dos~Santos, Oliva, and Andricopulo]{ferreira2015molecular}
Ferreira, L.~G., Dos~Santos, R.~N., Oliva, G., and Andricopulo, A.~D.
\newblock Molecular docking and structure-based drug design strategies.
\newblock \emph{Molecules}, 20\penalty0 (7):\penalty0 13384--13421, 2015.

\bibitem[Francoeur et~al.(2020)Francoeur, Masuda, Sunseri, Jia, Iovanisci, Snyder, and Koes]{francoeur2020three}
Francoeur, P.~G., Masuda, T., Sunseri, J., Jia, A., Iovanisci, R.~B., Snyder, I., and Koes, D.~R.
\newblock Three-dimensional convolutional neural networks and a cross-docked data set for structure-based drug design.
\newblock \emph{Journal of chemical information and modeling}, 60\penalty0 (9):\penalty0 4200--4215, 2020.

\bibitem[Friesner et~al.(2004)Friesner, Banks, Murphy, Halgren, Klicic, Mainz, Repasky, Knoll, Shelley, Perry, et~al.]{friesner2004glide}
Friesner, R.~A., Banks, J.~L., Murphy, R.~B., Halgren, T.~A., Klicic, J.~J., Mainz, D.~T., Repasky, M.~P., Knoll, E.~H., Shelley, M., Perry, J.~K., et~al.
\newblock Glide: a new approach for rapid, accurate docking and scoring. 1. method and assessment of docking accuracy.
\newblock \emph{Journal of medicinal chemistry}, 47\penalty0 (7):\penalty0 1739--1749, 2004.

\bibitem[Gao et~al.(2023)Gao, Qiang, Tan, Jia, Ren, Lu, Liu, Ma, and Lan]{gao2023drugclip}
Gao, B., Qiang, B., Tan, H., Jia, Y., Ren, M., Lu, M., Liu, J., Ma, W.-Y., and Lan, Y.
\newblock Drug{CLIP}: Contrasive protein-molecule representation learning for virtual screening.
\newblock In \emph{Thirty-seventh Conference on Neural Information Processing Systems}, 2023.
\newblock URL \url{https://openreview.net/forum?id=lAbCgNcxm7}.

\bibitem[Geiger \& Smidt(2022)Geiger and Smidt]{geiger2022e3nn}
Geiger, M. and Smidt, T.
\newblock e3nn: Euclidean neural networks, 2022.

\bibitem[Guan et~al.(2023{\natexlab{a}})Guan, Qian, Peng, Su, Peng, and Ma]{guan3d}
Guan, J., Qian, W.~W., Peng, X., Su, Y., Peng, J., and Ma, J.
\newblock 3d equivariant diffusion for target-aware molecule generation and affinity prediction.
\newblock In \emph{International Conference on Learning Representations}, 2023{\natexlab{a}}.

\bibitem[Guan et~al.(2023{\natexlab{b}})Guan, Zhou, Yang, Bao, Peng, Ma, Liu, Wang, and Gu]{guan2023decompdiff}
Guan, J., Zhou, X., Yang, Y., Bao, Y., Peng, J., Ma, J., Liu, Q., Wang, L., and Gu, Q.
\newblock Decompdiff: Diffusion models with decomposed priors for structure-based drug design.
\newblock 2023{\natexlab{b}}.

\bibitem[Harris et~al.(2023)Harris, Didi, Jamasb, Joshi, Mathis, Lio, and Blundell]{harris2023benchmarking}
Harris, C., Didi, K., Jamasb, A.~R., Joshi, C.~K., Mathis, S.~V., Lio, P., and Blundell, T.
\newblock Benchmarking generated poses: How rational is structure-based drug design with generative models?
\newblock \emph{arXiv preprint arXiv:2308.07413}, 2023.

\bibitem[Harrison(2016)]{harrison2016phase}
Harrison, R.~K.
\newblock Phase ii and phase iii failures: 2013--2015.
\newblock \emph{Nat Rev Drug Discov}, 15\penalty0 (12):\penalty0 817--818, 2016.

\bibitem[Hoogeboom et~al.(2022)Hoogeboom, Satorras, Vignac, and Welling]{hoogeboom2022equivariant}
Hoogeboom, E., Satorras, V.~G., Vignac, C., and Welling, M.
\newblock Equivariant diffusion for molecule generation in 3d.
\newblock In \emph{International conference on machine learning}, pp.\  8867--8887. PMLR, 2022.

\bibitem[Kitchen et~al.(2004)Kitchen, Decornez, Furr, and Bajorath]{kitchen2004docking}
Kitchen, D.~B., Decornez, H., Furr, J.~R., and Bajorath, J.
\newblock Docking and scoring in virtual screening for drug discovery: methods and applications.
\newblock \emph{Nature reviews Drug discovery}, 3\penalty0 (11):\penalty0 935--949, 2004.

\bibitem[Lin et~al.(2019)Lin, Giuliano, Palladino, John, Abramowicz, Yuan, Sausville, Lukow, Liu, Chait, et~al.]{lin2019off}
Lin, A., Giuliano, C., Palladino, A., John, K., Abramowicz, C., Yuan, M., Sausville, E., Lukow, D., Liu, L., Chait, A., et~al.
\newblock Off-target toxicity is a common mechanism of action of cancer drugs undergoing clinical trials. sci. transl. med. 11: eaaw8412, 2019.

\bibitem[Long et~al.(2022)Long, Zhou, Dai, and Zhou]{long2022DESERT}
Long, S., Zhou, Y., Dai, X., and Zhou, H.
\newblock Zero-shot 3d drug design by sketching and generating.
\newblock In \emph{NeurIPS}, 2022.

\bibitem[Luo et~al.(2021)Luo, Guan, Ma, and Peng]{NEURIPS2021_31445061}
Luo, S., Guan, J., Ma, J., and Peng, J.
\newblock A 3d generative model for structure-based drug design.
\newblock In Ranzato, M., Beygelzimer, A., Dauphin, Y., Liang, P., and Vaughan, J.~W. (eds.), \emph{Advances in Neural Information Processing Systems}, volume~34, pp.\  6229--6239. Curran Associates, Inc., 2021.
\newblock URL \url{https://proceedings.neurips.cc/paper_files/paper/2021/file/314450613369e0ee72d0da7f6fee773c-Paper.pdf}.

\bibitem[Lyu et~al.(2023)Lyu, Irwin, and Shoichet]{lyu2023modeling}
Lyu, J., Irwin, J.~J., and Shoichet, B.~K.
\newblock Modeling the expansion of virtual screening libraries.
\newblock \emph{Nature Chemical Biology}, pp.\  1--7, 2023.

\bibitem[Masuda et~al.(2020)Masuda, Ragoza, and Koes]{masuda2020generating}
Masuda, T., Ragoza, M., and Koes, D.~R.
\newblock Generating 3d molecular structures conditional on a receptor binding site with deep generative models, 2020.

\bibitem[Mysinger \& Shoichet(2010)Mysinger and Shoichet]{mysinger2010rapid}
Mysinger, M.~M. and Shoichet, B.~K.
\newblock Rapid context-dependent ligand desolvation in molecular docking.
\newblock \emph{Journal of chemical information and modeling}, 50\penalty0 (9):\penalty0 1561--1573, 2010.

\bibitem[Peng et~al.(2022)Peng, Luo, Guan, Xie, Peng, and Ma]{peng2022pocket2mol}
Peng, X., Luo, S., Guan, J., Xie, Q., Peng, J., and Ma, J.
\newblock Pocket2mol: Efficient molecular sampling based on 3d protein pockets.
\newblock In \emph{International Conference on Machine Learning}, 2022.

\bibitem[Quiroga \& Villarreal(2016)Quiroga and Villarreal]{quiroga2016vinardo}
Quiroga, R. and Villarreal, M.~A.
\newblock Vinardo: A scoring function based on autodock vina improves scoring, docking, and virtual screening.
\newblock \emph{PloS one}, 11\penalty0 (5):\penalty0 e0155183, 2016.

\bibitem[Radford et~al.(2021)Radford, Kim, Hallacy, Ramesh, Goh, Agarwal, Sastry, Askell, Mishkin, Clark, Krueger, and Sutskever]{Radford2021LearningTV}
Radford, A., Kim, J.~W., Hallacy, C., Ramesh, A., Goh, G., Agarwal, S., Sastry, G., Askell, A., Mishkin, P., Clark, J., Krueger, G., and Sutskever, I.
\newblock Learning transferable visual models from natural language supervision.
\newblock In \emph{International Conference on Machine Learning}, 2021.
\newblock URL \url{https://api.semanticscholar.org/CorpusID:231591445}.

\bibitem[Satorras et~al.(2021)Satorras, Hoogeboom, and Welling]{satorras2021n}
Satorras, V.~G., Hoogeboom, E., and Welling, M.
\newblock E (n) equivariant graph neural networks.
\newblock In \emph{International conference on machine learning}, pp.\  9323--9332. PMLR, 2021.

\bibitem[Schneider \& Schneider(2016)Schneider and Schneider]{schneider2016novo}
Schneider, P. and Schneider, G.
\newblock De novo design at the edge of chaos: Miniperspective.
\newblock \emph{Journal of medicinal chemistry}, 59\penalty0 (9):\penalty0 4077--4086, 2016.

\bibitem[Trott \& Olson(2010)Trott and Olson]{trott2010autodock}
Trott, O. and Olson, A.~J.
\newblock Autodock vina: improving the speed and accuracy of docking with a new scoring function, efficient optimization, and multithreading.
\newblock \emph{Journal of computational chemistry}, 31\penalty0 (2):\penalty0 455--461, 2010.

\bibitem[Verdonk et~al.(2003)Verdonk, Cole, Hartshorn, Murray, and Taylor]{verdonk2003improved}
Verdonk, M.~L., Cole, J.~C., Hartshorn, M.~J., Murray, C.~W., and Taylor, R.~D.
\newblock Improved protein--ligand docking using gold.
\newblock \emph{Proteins: Structure, Function, and Bioinformatics}, 52\penalty0 (4):\penalty0 609--623, 2003.

\bibitem[Wang et~al.(2002)Wang, Lai, and Wang]{wang2002further}
Wang, R., Lai, L., and Wang, S.
\newblock Further development and validation of empirical scoring functions for structure-based binding affinity prediction.
\newblock \emph{Journal of computer-aided molecular design}, 16:\penalty0 11--26, 2002.

\bibitem[Wong et~al.(2019)Wong, Siah, and Lo]{wong2019estimation}
Wong, C.~H., Siah, K.~W., and Lo, A.~W.
\newblock Estimation of clinical trial success rates and related parameters.
\newblock \emph{Biostatistics}, 20\penalty0 (2):\penalty0 273--286, 2019.

\bibitem[Zhang et~al.(2023)Zhang, Gao, Wang, Chen, Zhang, Chen, Li, Qi, and Wang]{zhang2023planet}
Zhang, X., Gao, H., Wang, H., Chen, Z., Zhang, Z., Chen, X., Li, Y., Qi, Y., and Wang, R.
\newblock Planet: A multi-objective graph neural network model for protein--ligand binding affinity prediction.
\newblock \emph{Journal of Chemical Information and Modeling}, 2023.

\bibitem[Zhang \& Liu(2023)Zhang and Liu]{zhang2023learning}
Zhang, Z. and Liu, Q.
\newblock Learning subpocket prototypes for generalizable structure-based drug design.
\newblock \emph{ICML}, 2023.

\bibitem[Zhao et~al.(2022)Zhao, Bao, Li, and Zhu]{zhao2022egsde}
Zhao, M., Bao, F., Li, C., and Zhu, J.
\newblock Egsde: Unpaired image-to-image translation via energy-guided stochastic differential equations.
\newblock \emph{Advances in Neural Information Processing Systems}, 35:\penalty0 3609--3623, 2022.

\bibitem[Zheng et~al.(2022)Zheng, Meng, Jiang, Lan, Wang, Lin, Li, Guo, Wei, and Mu]{zheng2022improving}
Zheng, L., Meng, J., Jiang, K., Lan, H., Wang, Z., Lin, M., Li, W., Guo, H., Wei, Y., and Mu, Y.
\newblock Improving protein--ligand docking and screening accuracies by incorporating a scoring function correction term.
\newblock \emph{Briefings in Bioinformatics}, 23\penalty0 (3):\penalty0 bbac051, 2022.

\end{thebibliography}
\bibliographystyle{icml2024}

\newpage
\appendix
\onecolumn
\section{Delta Score}\label{sec:app DS}
\subsection{General Specificity Metric}
Existing docking scores typically calculate the interaction energy between the molecule and the protein pocket to reflect the binding strength~\cite{friesner2004glide}. We denote the interaction energy as $E(x,y)$, with the molecule and the pocket denoted as \( x \) and \( y \), respectively. Utilizing the Boltzmann distribution, the energy naturally defines a joint probability describing the probability that \( x \) and \( y \) can bind with each other.
\begin{equation}\label{eq:boltzman}
    p(x,y) = \frac{1}{Z}e^{-E(x,y)}, 
\end{equation}
where $Z=\sum_{x\in \mathbb{X}} \sum_{y\in \mathbb{Y}} e^{-E(x,y)}$ is the normalization coefficient and the sum is over the pre-defined finite pocket set $\mathbb{Y}$ and ligand set $\mathbb{X}$.
As a result, existing docking scores solely describe the joint probability of pocket-ligand pairs and are unable to reflect the specificity of binding interactions \( p(y | x) \), which describes the probability that molecule $x$ binds specifically to pocket $y$ among all potential targets.

To address the challenge that existing docking scores overlook binding specificity, we propose to focus on the conditional probability \( p(y|x) \) and formulate a novel specificity metric. Consider the logarithm of the specific binding probability for a molecule $x$ and a pocket $y$:
\begin{align}\label{eq:conditional measure}
    &\log p(y|x) = \log\frac{p(x,y)}{p(x)} = \log\frac{\frac{1}{Z}e^{-E(x,y)}}{ \sum_{y\in \mathbb{Y}}\frac{1}{Z}e^{-E(x,y)}} =  -E(x,y)-\log\sum_{y\in \mathbb{Y}} e^{-E(x,y)}.     
\end{align}
Therefore, \eqref{eq:conditional measure} functions as a specificity metric of binding and can be calculated from the pairwise interaction energy.

\subsection{Delta Score: A Specificity Metric for Generated Models}\label{sec:app DS proof}
In SBDD benchmarking, conventional methods use docking scores to measure the quality of the generated models and which fails to explicitly embody the specific binding behavior. Inspired by the general specificity metric defined in the previous section, we propose the Delta Score to measure the generated model and reflect the specificity and efficacy of the generated molecules. 

For a dataset with \( n \) pockets \( y_1, y_2, \cdots, y_n \),
the model generates \( m_i \) molecules \( x_{i1}, x_{i2}, \cdots, x_{im_i} \) for each pocket \( y_i \). Then for any $i\in\{1,\cdots, n\}$, \(j\in\{1,\cdots, m_i\}\), specificity metric can be calculated by
\begin{align}\label{eq:ds1}
    -S(x_{ij},y_i)-\log\sum_{k=1}^n e^{-S(x_{ij},y_k)} ,
\end{align}
where we use a docking score, denoted as $S(x,y)$, to estimate the interaction energy $E(x,y)$.

Next, we provide a simpler form of the specificity metric by using the following lemma. 
\begin{lemma}[Log-sum-exp approximation]
 For a finite set of real numbers \( z_1, z_2, \cdots, z_N \), the log-sum-exp function can be upper and lower bounded by the maximum function.
    \begin{equation}
        \max\{z_1,\cdots, z_N\}\leq \log(\sum_{k=1}^N\exp(z_k))\leq \max\{z_1,\cdots, z_N\}+\log(N) \text{.}
    \end{equation}
\end{lemma}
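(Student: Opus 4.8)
The plan is to prove the log-sum-exp inequality directly from the monotonicity of the exponential and logarithm functions, treating the two bounds separately. Write $M = \max\{z_1,\dots,z_N\}$. Both inequalities follow from elementary comparisons of the sum $\sum_{k=1}^N \exp(z_k)$ against a single exponential and against $N$ copies of the largest exponential, after which applying $\log$ (which is increasing) preserves the ordering.

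For the lower bound, I would observe that since every term $\exp(z_k)$ is strictly positive, the full sum dominates any individual term; in particular it dominates the largest one:
\begin{equation}
    \exp(M) = \max_k \exp(z_k) \le \sum_{k=1}^N \exp(z_k).
\end{equation}
Here I use that $\exp$ is monotonically increasing, so $\max_k \exp(z_k) = \exp(\max_k z_k) = \exp(M)$. Taking logarithms of both sides and using monotonicity of $\log$ yields $M \le \log\!\big(\sum_{k=1}^N \exp(z_k)\big)$, which is the left inequality.

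For the upper bound, I would bound each summand above by the maximal term, $\exp(z_k) \le \exp(M)$ for every $k$, and sum over the $N$ indices to get $\sum_{k=1}^N \exp(z_k) \le N\exp(M)$. Taking logarithms and using the identity $\log(N\exp(M)) = \log N + M$ gives $\log\!\big(\sum_{k=1}^N \exp(z_k)\big) \le M + \log N$, the right inequality. Combining the two displays completes the proof.

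There is no real obstacle here: the statement is a standard sandwich estimate whose entire content is the two-sided comparison of a positive sum with its largest term. The only point requiring a moment's care is the interchange $\max_k \exp(z_k) = \exp(\max_k z_k)$, which is justified precisely because $\exp$ is strictly increasing and therefore commutes with $\max$; everything else is monotonicity of $\log$ applied to nonnegative quantities. The bound is tight in the sense that the lower bound is attained when a single $z_k$ strictly dominates all others and is approached in the upper direction when all the $z_k$ are equal, which is worth noting since the approximation in \eqref{eq:ds22} replaces the log-sum-exp by the max and the lemma certifies the error is at most $\log N$.
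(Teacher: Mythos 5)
Your proof is correct and uses essentially the same argument as the paper: both bounds come from sandwiching the sum $\sum_k \exp(z_k)$ between $\exp(M)$ and $N\exp(M)$ and then applying the monotone $\log$, exactly as in the paper's chained-inequality proof (which just merges your two separate bounds into one display). The closing remark on tightness and its relevance to the max-approximation of the log-sum-exp in the Delta Score is a nice addition but not a different method.
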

\begin{proof}
\begin{equation}
    \begin{aligned}
    \max\{z_1,\cdots, z_N\}&=\log(\exp(\max\{z_1,\cdots, z_N\}))\\
    &\leq \log(\exp(z_1)+, \cdots, \exp(z_1)))\\
    &\leq   \log(N \exp(\max\{z_1,\cdots, z_N\}))\\
    &= \max\{z_1,\cdots, z_N\}+\log(N)     \text{.}   
    \end{aligned}
\end{equation}
\end{proof}
Therefore, we can utilize the maximum function to approximate the log-sum-exp function. The approximation error is small when the exponential of the largest element is much larger than the exponential of other elements. It is the case for docking scores since they are negative in most cases. Then \eqref{eq:ds1} can be approximated by $-S(x_{ij},y_i)-\max_{y_k} (-S(x_{ij},y_k))$. Analogous to existing docking scores, we define the initial Delta Score as
\begin{align}\label{eq:ds2}
    S(x_{ij},y_i)-\min_{y_k} S(x_{ij},y_k) ,
\end{align}
which has the same unit as the utilized docking score $S(x,y)$ and is also the smaller the better. 
Notice that to calculate \eqref{eq:ds2} for all $i=1\cdots n$, $j=1,\cdots m_i$, we need to perform docking for molecules with all possible pockets which is of quadratic complexity $O(n^2)$. To address this challenge, we provide a computationally efficient estimation of the initial Delta Score 
\begin{align}\label{eq:ds3}
   S(x_{ij},y_i) - S(x_{ij},y_k) , \text{where $k$ is randomly sampled in $\{1, 2, \ldots, n\}$ with $k \neq i$}.
\end{align}
Finally, we arrive at the definition of the Delta Score for a given pocket: For any pocket $y_i$, we sample  $k \in \{1, 2, \ldots, n\}$ with $k \neq i$: 
\begin{equation}
\label{formula:delta score}
\begin{split}
\textbf{Delta Score}(y_i) \triangleq \frac{1}{m_i} \sum_{j=1}^{m_i} (S(x_{ij},y_i) - S(x_{ij},y_k)).
\end{split}
\end{equation}

 In fact, \eqref{eq:ds3} serves as an unbiased estimation of the quantity $S(x_{ij},y_i) - \frac{1}{n-1}\sum_{k\neq i} S(x_{ij},y_k)\leq S(x_{ij},y_i)-\min_{y_k} S(x_{ij},y_k)$, consistently yielding lower values than \eqref{eq:ds2}.
Both formulations can be intuitively explained by emphasizing the binding affinity of molecules with their specific target protein pockets, as opposed to multiple similar pockets without selectivity. A smaller value of \eqref{eq:ds2} indicates the molecule $x_{ij}$, generated by the model specifically for target $y_i$, has a high affinity towards target $y_i$itself, relative to any alternative target $y_k$. Whereas \eqref{eq:ds3} relaxes this constraint, allowing for comparisons of affinity relative to the average affinity across all other targets $y_k$.

\section{SE(3)-Equivariant GNN}
In Section 4, we denote molecules and pockets as $X = [c^X, v^X]$ and $Y = [c^Y, v^Y]$ respectively. 
\subsection{Parameterization of diffuison process}
Denote the GNN network as $\phi_\theta$. We model the whole complex as a graph $[X, Y]$ and let the neural network predict $[x_0, v_0]$ at each time step:
\begin{equation}
    [\hat{x_0},\hat{v_0}] = \phi_\theta(X_t, t, Y) = \phi_\theta([c^X_t, v^X_t], t, Y).
\end{equation}
At the l-th layer, the atom hidden embedding $\textbf{h}$ and coordinates $c$ are updated alternately as follows:

\begin{align}
&\textbf{h}i^{l+1} = \textbf{h}i^l + \sum_{j\in\mathcal{V},j\neq i} f_{\textbf{h}}(d^l_{ij},\textbf{h}i^l,\textbf{h}j^l,e{ij};\theta_{\textbf{h}})  \text{,}\notag \\
&c_i^{l+1} = c_i^l + \sum_{j\in\mathcal{V},j\neq i}(c_i^l-c_j^l)f_c(d^l_{ij},\textbf{h}i^{l+1},\textbf{h}j^{l+1},e{ij};\theta_x) \cdot \mathbb{1}_{mol}\text{.}
\end{align}
, where $d_{ij}$ is the euclidean distance between atoms $i$ and $j$ along with an additional feature $e_{ij}$ indicating the connection type (protein-protein, ligand-ligand, or protein-ligand). The mask $\mathbb{1}_{mol}$ is applied to prevent updates on protein atom coordinates. The initial atom hidden embedding $\textbf{h}_0$ is obtained through a linear embedding layer that encodes atom information. The final atom hidden embedding $\textbf{h}_L$ is then passed through a MLP and a softmax function to obtain the predicted value $\hat{v_0}$.

\subsection{Parameterization of SBE-Encoder}
We also adopt 3D-Equivariant graph neural network to parameterize $\text{SBE-Enc}_X$ and $\text{SBE-Enc}_Y$. Different from B.1, molecules and pockets are encoded separately instead of forming a single full graph. Taking  $\text{SBE-Enc}_X$ as an example, $\text{SBE-Enc}_Y$ follows the same principle.

At the l-th layer, we do not update the atom coordinates but only the hidden embedding $\textbf{h}$:
\begin{align}
\textbf{h}i^{l+1} = \textbf{h}i^l + \sum_{j\in\mathcal{V},j\neq i} f_{\textbf{h}}(d^l_{ij},\textbf{h}i^l,\textbf{h}j^l,e{ij};\theta_{\textbf{h}})\text{.}
\end{align}
In the final layer, we obtain the encoding of the entire molecule by performing average pooling over all the atoms:
\begin{equation}
f(x) = \text{SBE-Enc}_X(X) = \frac{1}{N_X}\sum_{i=1}^{N_X}\textbf{h}_i^L\text{.}
\end{equation}



\section{Visualizations}
\label{appendix: visualization}

\begin{figure}[ht]
\centering
\begin{subfigure}{.5\textwidth}
  \centering
  \includegraphics[width=.95\linewidth]{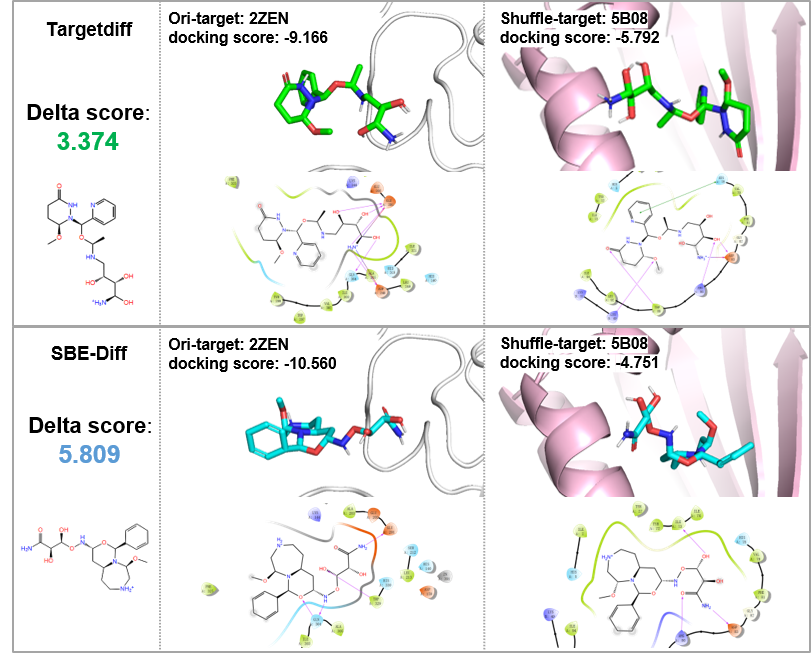}
  \caption{}
  \label{fig:sub1}
\end{subfigure}%
\begin{subfigure}{.5\textwidth}
  \centering
  \includegraphics[width=.95\linewidth]{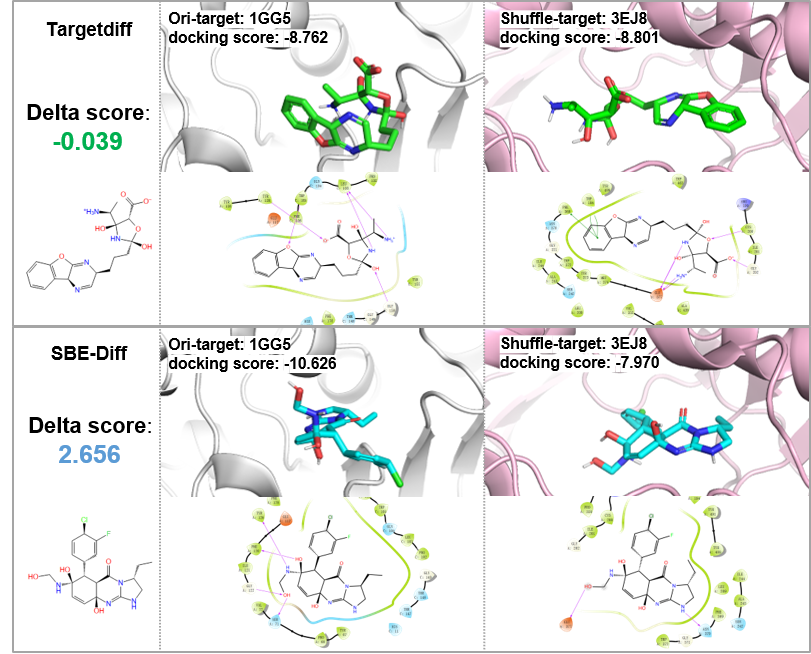}
  \caption{}
  \label{fig:sub2}
\end{subfigure}
\newline
\begin{subfigure}{.5\textwidth}
  \centering
  \includegraphics[width=.95\linewidth]{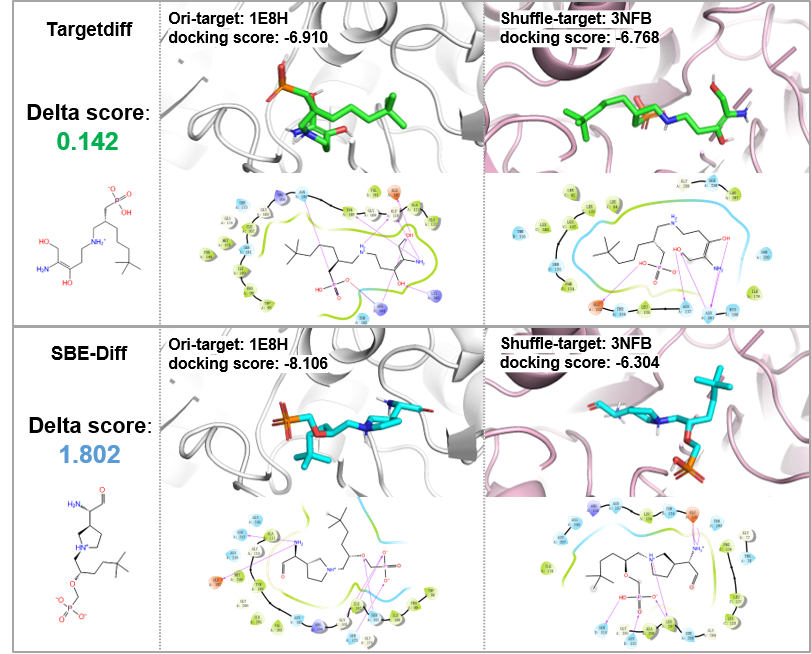}
  \caption{}
  \label{fig:sub3}
\end{subfigure}%
\begin{subfigure}{.5\textwidth}
  \centering
  \includegraphics[width=0.95\linewidth]{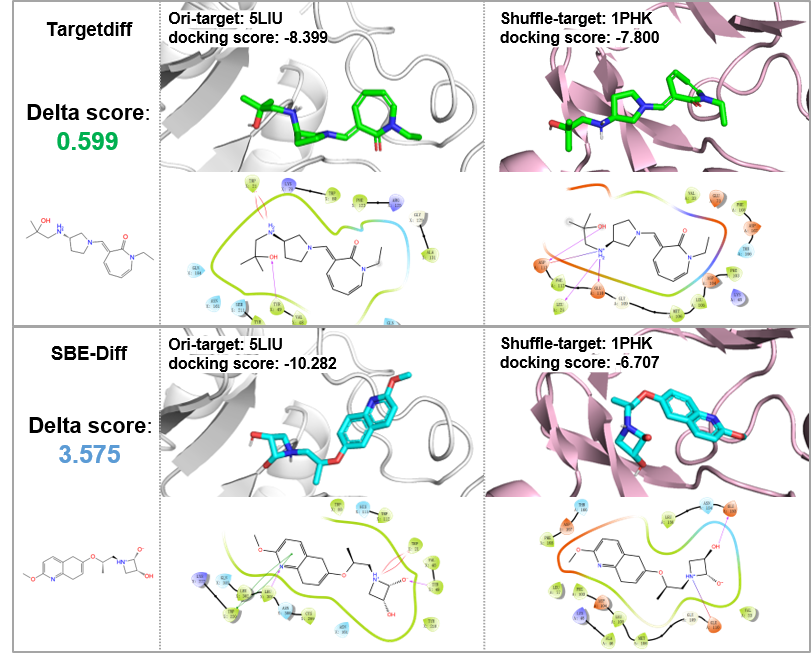}
  \caption{}
  \label{fig:sub4}
\end{subfigure}
\caption{Visualizations of molecules generated by SBE-Diff versus molecules generated by TargetDiff. Molecules are docked with their original targets and shuffled targets.}
\label{fig:test}
\end{figure}

\end{document}